\newtheorem{theorem}{Theorem}
\newtheorem*{theorem*}{Theorem}
\newtheorem{lemma}[theorem]{Lemma}
\begin{document}

\interfootnotelinepenalty=10000

\begin{titlepage}
\begin{center}


\textsc{BAR-ILAN UNIVERSITY}\\[2.5cm]


{ \huge \bfseries Two-Body Assignment Problem in the Context of the Israeli Medical Internship Match \\[0.5cm] }

M.Sc. Thesis\\[2.5cm]

\begin{minipage}[t]{0.4\textwidth}
\begin{flushleft} \large
SLAVA BRONFMAN
\end{flushleft}
\end{minipage}%
\begin{minipage}[t]{0.4\textwidth}
\end{minipage}\\[3.5cm]

Submitted as partial fulfillment of the requirements for the Masterӳ Degree in the Department of Computer Science, Bar-Ilan University.
\\[3.5cm]
\end{center}

\noindent Ramat Gan, Israel\hfill 2015 

\end{titlepage}

\thispagestyle{empty} 
This work was carried out under the supervision of Dr.Avinatan Hassidim from Department of Computer Science, Bar-Ilan University.
\newpage
\thispagestyle{empty} 

\section*{Acknowledgment}
This research project would not have been possible without the support of many people who have supported, motivated and encouraged me throughout my graduate studies.
\\\\
First and foremost, I would like to thank my advisor Dr. Avinatan Hassidim for giving me the opportunity to carry out this research under his professional and inspiring supervision. His academic guidance, constructive criticism, insightful advices, understanding, patience and warm personal character had a big influence on the outcome presented here and also on my own personal enlightening experience as a researcher.
\\\\
Deepest gratitudes are also due to Assaf Romm for all the time and efforts he spent assisting in this thesis, Assaf was abundantly helpful and offered invaluable assistance, guidance and support with the professional material.
\\\\
A lot of thanks to Prof.Noga Alon for the time and efforts he spent assisting in this thesis.
\\\\
I owe special thanks to my family and friends who provided me with encouragement all this time.


\newpage

\thispagestyle{empty} 
\renewcommand*\contentsname{Table of Contents}
\tableofcontents
\thispagestyle{empty} 
\newpage

\thispagestyle{empty} 
\listoffigures
\newpage

\thispagestyle{empty} 
\listoftables
\newpage

\pagenumbering{Roman} 

\begin{abstract}
The final step in getting an Israeli M.D. is performing a year-long internship in one of the hospitals in Israel. Internships are decided upon by a lottery, which is known as ``The Internship Lottery''. In 2014 we redesigned the lottery, replacing it with a more efficient one. The new method is based on calculating a tentative lottery, in which each student has some probability of getting to each hospital. Then a computer program ``trades'' between the students, where trade is performed only if it is beneficial to both sides. This trade creates surplus, which translates to more students getting one of their top choices. The average student improved his place by $0.91$ seats. The new method can improve the welfare of medical graduates, by giving them more probability to get to one of their top choices. It can be applied in internship markets in other countries as well.

This thesis presents the market, the redesign process and the new mechanism which is now in use. There are two main lessons that we have learned from this market. The first is the ``Do No Harm'' principle, which states that (almost) all participants should prefer the new mechanism to the old one. The second is that new approaches need to be used when dealing with two-body problems in object assignment. We focus on the second lesson, and study two-body problems in the context of the assignment problem. We show that decomposing stochastic assignment matrices to deterministic allocations is NP-hard in the presence of couples, and present a polynomial time algorithm with the optimal worst case guarantee. We also study the performance of our algorithm on real-world and on simulated data.
\end{abstract}
\newpage

\pagenumbering{arabic} 

\section{Introduction} \label{sec:introduction}
Prior to receiving their medical degrees Israeli medical graduates (and foreign trained doctors) must participate in a year-long internship program (not to be confused with the residency period following their graduation). The internship is carried out in one of 23 possible hospitals in the country, which are allocated interns relative to the size of their patient population (with an advantage to peripheral hospitals), with the smallest allocation being of 4 interns. The internship is perceived as a tax that must be paid: there is (almost) no correlation between the hospital in which internship was performed to the hospital in which residency is performed, interns work nights and weekend to receive below minimum wage, and are not getting any tuition during the internship.\footnote{One justification for this tax is that medical studies are highly subsidized by the government, with a tuition cost of around $2500 \$$ per year. In return, the government uses the interns as cheap labor in hospitals (all hospitals are operated by the government).}

By and large, interns are not assigned to hospitals on the basis of merit, since the government wants to spread the more talented interns everywhere.\footnote{There is an exception to this rule - five interns with a PhD get to choose where they want to be assigned.} Since merit is not a criterion, it makes sense to use some form of lottery to assign the interns. For many years, the lottery which was chosen was \textit{Random Serial Dictatorship} (henceforth RSD).\footnote{Often referred to also as ``Random Priority''. See, for example, \cite{as2003}.}, with some house rules.

In this thesis, we describe the market and the different populations of students, discuss the old lottery, present our new design for the lottery, which was used first in 2014 (and is also in use this year) and present the results of the implementation. We focus this work on the lessons we learned, and on techniques that can be used in designing future markets. More specifically, this thesis deals with
\begin{enumerate}
  \item the ``Do No Harm'' principle, and the constraint that under some metric no student will be worse off in the new match,
  \item the trade-off between efficiency and truthfulness, and
  \item two-body problems in the context of the assignment problem.
\end{enumerate}
In addition, we describe the market, the participants, and share the available data.

\subsection{Background on the internship market}

Each year, two different cohorts of students enter independent lotteries which determine where each student performs her internship. The number of interns allocated to each hospital is determined independently for each cohort.

The first cohort is composed of students who are in their final year of their medical school in Israel. This cohort is the more interesting and diverse one, and contains four populations:
\begin{enumerate}
  \item a handful of students (usually with a PhD) who get to choose their internship. From a market design perspective, they can be treated as reducing the capacity.
  \item Couples who wish to be in the same hospital. Unlike the (American) National Residency Matching Program (NRMP), there is no notion of preferences over pairs of hospitals. This makes sense: in the NRMP one spouse may want to be a radiologist, and the other may want to be a psychiatrist (so they need to rank pairs of programs), whereas in the current match everyone wants to (well, doesn't want to, but has to) be an intern. Around $10\%$ of the interns are involved in a couple, and most couples are not married, but are just friends who want to share an apartment if they are stuck in some village where they do not know anybody. Unlike the NRMP, couples are \textit{guaranteed} to be interns in the same place.
  \item Students who have kids. This population is characterized by not wanting to transfer the family. Hence, they are usually willing to take any hospital which is a driving distance from home, and are less sensitive to hospital's quality.
  \item The rest of the students.
\end{enumerate}

In the last decade, the lottery that was used was Random Serial Dictatorship, with a few small modifications:
\begin{enumerate}
  \item The five students who get to pick, choose where they want to go.
  \item A couple is considered as one student for the matter of choosing the permutation. Alternatively, each couple gets two consecutive spots. If the couple's top choice among hospitals which have vacancies is $A$, but $A$ has just one free spot, they are not allowed to go there and will keep going down their preferences.\footnote{It is not clear what happens if all hospitals have just one vacancy when a couple is chosen. Maybe it just did not happen in recent years. The medical students believed that couples will always stay together, and this was one of their requirements.}
  \item Every year there is a vote about what to do with students with kids (sometimes an advantage is given only to students with two or more kids). The options are to treat them as regular students, to let them take part in the lottery but promise them a seat in their district (the country is divided into five districts for that matter), or to promise them a seat in their district, without participation in the lottery (so they lose the chance to get good hospitals). When the last option is being used, most parents do not declare themselves parents, as they claim that this is not really a benefit.
\end{enumerate}

The second cohort consists of graduates of foreign schools, who want to become Israeli doctors. This only applies to students who studied in certain countries and have no experience, and most of this cohort is composed of Israelis who were not accepted to medicine in Israel, and studied in Jordan or in Europe with the intention of going back to Israel after graduation.

For the foreign cohort, the internship lottery is run by the Ministry of Health (MoH). There are no votes or appeals, and rules are simpler. For the Israeli cohort, the MoH just decides on the capacities, and delegates the rest of the lottery process to a committee of students, elected by the student body. It is common that the committee puts important decisions (such as the parental benefits) to vote by the entire student body.

\section{The redesign process}
We were first approached in 2010 by several medical students who asked for help in redesigning their market. They had two things which worried them:
\begin{enumerate}
  \item giving fair benefits to parents. They felt that parents should be treated differently, but it should not be at the expense of other students, and were not sure how to do this, and
  \item they wanted to improve the efficiency of the system.
\end{enumerate}
We took part in the internship committee of 2011, to understand their demands better. They added a couple of technical requirements:
\begin{enumerate}
  \item {\em Do No Harm}: The redesign process should not hurt any population of students. Everyone should be (weakly) better off in the new design (at least in expectation).
  \item In particular, students should have the option of matching together, just like they did up until then. A couple should get a guarantee that they will be matched together.
\end{enumerate}

The 2011 class was also generous enough to share their data with us. Our initial idea was to use some form of Probabilistic Serial \cite{bm2001}. However, the results were very similar to RSD, and we wanted to improve efficiency. Hence, we decided to use a {\em rank efficient} mechanism, which essentially amounts to maximizing some linear function on the number of students who get their $i$'th rank, for every $i$ (more on that in \autoref{sec:TheNewMechanism}).

To cope with the DNH principle, we ran surveys, in which we gave students two possible vectors of probabilities (for being assigned to each hospital), and asked them which is better. We took the data, and tried to find a simple utility model for it. The model we ended up using is that there are $m$ hospitals, and you have probability $p$ to get to rank $i$, this gives you $p (m - i + 1)^2$ points of happiness, and a profile with more happiness points is better. While one could find a better fit to the model using more parameters, we were happy that there are no constants that we need to compute. We also defined a different happiness function for parents, but it was not used (see below).

Given that we have a definition of happiness, we first compute for every intern what would be her expected happiness under RSD. Then we choose allocation probabilities to maximize total happiness, conditioned on capacities and on the DNH principle -- that every agent is happier under the current algorithm. Given the probability matrix, we decompose it to a convex sum of assignment matrices (more on that in \autoref{two:body}), and choose a matrix at random (according to the weights).

\paragraph{Switching to the new mechanism}
We proposed the new mechanism to the class of 2012, but were rejected. They did not want to have anything to do with computers, and were afraid of bugs and backdoors.

For the 2013 class we approached the MoH. We understood that the MoH is less conservative than the student body, as it has more to gain from new ideas: if the student body is given a new idea who would make it better off for with probability $0.4$ and worse off with probability $0.6$, then they should reject it. However, under these terms it is in the MoH's best interest to try the new idea. If the new idea is a failure than one class pays the price, but if it is a success then future generations will use it as well. In the MoH we found moral support, more data, but no willingness to try the new system. Their main argument was that they do not want to interfere in something which is traditionally run by the students, and are not willing to use the foreign trained students as guinea pigs.

For the 2014 class we went back to the student internship committee. At this stage we had a large quantity of historical data to present about both algorithms, and also had a new idea: we will not run the lottery with a computer, but rather hand them the decomposition of the matrix into a weighted list of assignments, and let them choose the assignment. This way they can compute their own marginal distributions, and verify that there is no backdoor.\footnote{To be more exact, they can compute their marginal distributions to see that they are better off, and the distributions of the members of the committee to see that no one is cheating.} This persuaded the committee to put this for a vote (along with the parental benefits for that year), with the condition that the new algorithm will be used only if it wins an absolute majority. 80\% of the medical graduates who participated in the poll (55\% of the students) voted in favor of the new approach. Following the successful vote, we indeed implemented the algorithm and it was used to assign interns on 2014. In addition, it was used for the foreign cohort that year. A subsequent vote of the following cohort of interns, that took place on January 2015, certified the continued use of our algorithm for 2015, and the foreign trained interns will keep using it until further notice. In order to increase transparency we explained the student body the rules of the new match \cite{bharshm2015}.

We have ran a survey in the class of 2014, presenting students with their marginal distribution under RSD and under the new algorithm, and received supportive feedback. However, we were asked by the student committee to distribute the survey only to committee members, as we ran it after the lottery was over and they wanted to let the dead rest. This turned out to be very unfortunate, as the class of 2015 was mad at us for not handing each student in 2014 his or her marginal distribution. We have agreement from this year's committee to perform the survey with the entire body of students.

In the years 2014 and 2015, the students voted against giving any parental benefits. We think that the main source for this vote is not our algorithm, but a new school of medicine which accepts students who already earned an undergraduate degree (usually in Biology) and want to undergo retraining and perform a career change. The graduates of this school constitute the majority of parents in the market, and therefore it became an issue of ``helping students in the other school'' which attracts less sympathy than ``helping the parents in my class''.

\paragraph{Truthfulness and efficiency}
Our main concern with the new mechanism is that we sacrifice truthfulness (at least to some degree) to achieve efficiency. This is probably a necessity: \cite{liu2013ordinal} showed that in large random market all asymptotically efficient, symmetric, and asymptotically strategy-proof mechanisms are equivalent to Random Serial Dictatorship. While in theory it could be that our market is not large enough, or that there is something special in the valuation profiles we face, it is unlikely that this is the case. Hence, any improvement in the welfare will lead to a non-truthful mechanism, such that some agents will be able to deviate and make a non-negligible gain.

When choosing between truthfulness and efficiency, we need to remember that truthfulness is a means, while efficiency is an end. We see two main arguments for truthful mechanisms:
\begin{itemize}
  \item Truthful mechanisms reduce cognitive burden for the participants.
  \item When bidding is truthful, the designer can evaluate the welfare. When bidding is not truthful, such evaluation is not possible.
\end{itemize}
We think that for this market, one can mitigate the disadvantages of a non-truthful algorithm, and the gain in welfare is big enough to warrant some degree of untruthfulness. Specifically, we believe that bidding today is truthful (as much as anything can be truthful in ranking 25 priorities), and that if in the future a large body of students would bid untruthfully, then discussions in forums and on Facebook revolving around strategic bidding would emerge. We know that in the past, it was a common (and illegal) practice to sell and buy internship positions after the lottery.\footnote{In theory, there should be no trade following a lottery if monetary transfers are not allowed. In practice, internship begins at least ten months (and sometime eighteen months) after the lottery is conducted, so it makes sense that someone who would want a desired hospital in city A would want to move to city B for some personal reason. Moreover, as mentioned, illegal monetary transfers were conducted. The students would approach the MoH, asking to trade places, hiding any financial agreement, and would be granted the permission to trade. The market price of a good internship used to be 2500\$.} Negotiating these deals was done through social media, and it was not too difficult for us to find it. Discussing bidding strategies should happen through the same channels, and be less hidden (since there is nothing wrong about it).

\section{The new mechanism} \label{sec:TheNewMechanism}
We know that after the assignment is done, there is no room for trade. Still, we want to let the students trade to achieve a better allocation. Therefore, instead of trading seats in hospitals, we trade probability shares. Indeed, a student $i$ attending the lottery should not care too much about the internal mechanics $-$ all he should care about is $p_{i,1},\ldots p_{i,23}$ where $p_{i,k}$ is the probability student $i$ gets his $k$'th choice.

Looking at the probability vector each student gets, RSD turns out to be far from optimal. Consider for example the following fictitious lottery, which involves four students and four hospitals, each with capacity 1 [Table \ref{tab:example}].

\begin{table}[htbp]
	\centering
	\caption[Example of students' preferences]{The four students' prioritized preferences over hospitals. All students ranked hospitals A and B as their first or their second choice respectively. Alice and Diane each ranked hospital C as their third choice and then hospital D, while Bob and Charlie each ranked hospital D as their third choice and C as their fourth.\vspace{0.5 cm}}
	\label{tab:example}
		\begin{tabular}{|c|c|c|c|}
			\hline
			\textbf{Alice} & \textbf{Diane} & \textbf{Bob} & \textbf{Charlie} \\ \hline
			A & A & A & A \\ \hline
			B & B & B & B \\ \hline
			C & C & D & D \\ \hline
			D & D & C & C \\
			\hline
		\end{tabular}
\end{table}

We analyze the probability that Alice gets each hospital in this example:
\begin{itemize}
\item With probability $\frac{1}{4}$ she is the first to choose. In this case she chooses $A$, and therefore $\Pr(A) = \frac{1}{4}$.
\item With probability $\frac{1}{4}$ she is the second to choose. In this case $A$ is already taken, and therefore she chooses $B$. Therefore $\Pr(B) = \frac{1}{4}$.
\item With probability $\frac{1}{4}$ she is the third. In this case, $A$ and $B$ are already taken, and Alice chooses $C$. But this is not the only way Alice can get $C$. If Alice is the last to choose, then it is possible that $C$ is still open, and she can take it. Indeed, if Diane goes first and takes $A$, Bob goes second and takes $B$, Charlie is third and takes $D$, then Alice can get $C$ although she is the last to choose. Looking at this more closely, one can see that if Alice is fourth and Diane is not the third, Alice would also get $C$. This means that the total probability that Alice gets $C$ is $\Pr(C) = \frac{5}{12}$.
\item With probability $\frac{1}{12}$, Alice is fourth, and Diane was third. This is the only case in which Alice gets $D$, and therefore     $\Pr(D) = \frac{1}{12}$.
    \end{itemize}
\noindent One can verify that the sum of probabilities is 1, and Alice is always assigned to a hospital.

    A similar argument shows that Bob has probability of $\frac{1}{4}$ to get $A$, probability of $\frac{1}{4}$ to get $B$, probability of $\frac{5}{12}$ to get $D$ and probability of $\frac{1}{12}$ to get $C$.

    Note that in this simple example, Alice and Bob could trade probabilities, and this would benefit both of them. Imagine that Bob could somehow give Alice his probability of being assigned to $C$, and in return she would give him her probability of being assigned to $D$. This would result in a state in which Alice has probability of $\frac{1}{4}$ for $A$, $\frac{1}{4}$ for $B$ and $\frac{1}{2}$ for $C$. Bob would have probability of $\frac{1}{4}$ for $A$, $\frac{1}{4}$ for $B$ and $\frac{1}{2}$ for $D$, which is an improvement for both of them, compared to the RSD probability shares.    Charlie and Diane could trade probabilities among themselves in a similar manner.

    While this already improves the current state of affairs, we can do even better. Suppose that Alice and Bob agree on their first and tenth choices, but Bob's second choice is some hospital $H$, which is also Alice's ninth choice. Also, suppose that Alice has some positive probability $p$ of being assigned to $H$. In this case both students would possibly be happier if Alice ``gives'' Bob  her probability $p$ of getting to $H$, and Bob would ``give'' her $\frac{p}{2}$ probability of getting to his first place, and $\frac{p}{2}$ probability of getting to his tenth place (see Figure~\ref{fig:AliceAndBob}). In this case:

    \begin{enumerate}
      \item Assuming there is no huge gap between the ninth place and the tenth place, Alice should be happy. She ``lost'' probability $p$ of getting to her ninth place, and received probability of $\frac{p}{2}$ to get to her tenth place (similar), and probability of $\frac{p}{2}$ to get to her first place.
      \item Assuming there is no huge gap between the first and second place, Bob should be happy. He lost $\frac{p}{2}$ probability from his first place and $\frac{p}{2}$ probability from his tenth place, to get $p$ probability for his second place.
    \end{enumerate}
		
\begin{figure}[]
  \caption[Trading probabilities example]{Alice gives Bob her probability of being assigned to $H$, and in return she gets half of this probability to her first choice and half of it to her last choice from Bob.}
	\centering
		 \includegraphics[scale=0.7]{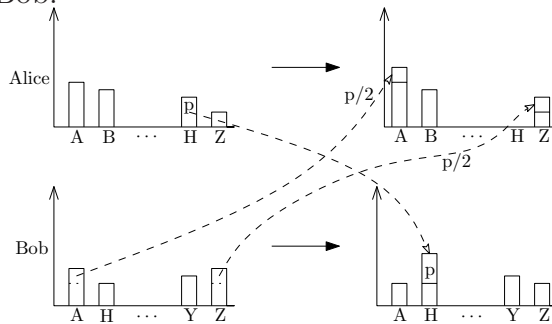}
	\label{fig:AliceAndBob}
\end{figure}

    While clearly this trade is beneficial, it raises a couple of subtle points, which are related:

    \begin{enumerate}
      \item Why should Bob give $\frac{p}{2}$ of his tenth place and $\frac{p}{2}$ of his first place? Why not $\frac{2p}{3}$ of his tenth place, and $\frac{p}{3}$ of his first place, or vice verse, or some different numbers?
      \item The difference between first and second place is usually larger than that between ninth and tenth place.
    \end{enumerate}

    As explained below, the students were asked to fill surveys, to assert the difference between the first and the second place, the second and the third place and so on. Based on the surveys results', more weight was given to the difference between first and second place than to the difference between the ninth and the tenth.

     One question which was not addressed in these simple examples is how to decide which students should trade with whom, and what trades to perform. Of course, the students do not actually trade with each other, but rather a computer program ``virtually'' trades on their behalf. We also use more complex trades, which may involve three or more students at once, if they benefit all the participants.

     Another question is how do we perform the lottery at the end of the process? With RSD, we could just choose an ordering over the students. But in the first example, what lottery gives Alice probability of $\frac{1}{2}$ to get to $C$ and Bob probability $\frac{1}{4}$ to get to $D$? Note that if each student would just select a random hospital according to the probabilities, it is possible that two students would be assigned to the same hospital, so this is not a valid solution.

     In the rest of the section, we formally explain our method, and solve the two questions presented above.

\subsection{Description of the new lottery}
Our method works as follows. First, approximate the probability of each student to be assigned to each hospital using RSD. We do this by running a large number of trials, $N$, and by the law of large numbers the average of all those RSD lotteries will be sufficiently close to the true value. The probability is calculated as

\begin{equation}
\label{eq:prob}
 p_{i,k}=\frac{n_{i,k}}{N},
\end{equation}
where $n_{i,k}$ is the number of RSD lotteries in which Student~$i$ was assigned to the hospital he ranked as his $k$-th choice.

Once we have the approximated probabilities we continue with the second stage of the algorithm which is trading the probabilities among the students. We do this using \textit{Linear Programming}, which is a mathematical optimization method for maximizing a target function subject to several constraints \cite{Kantorovich}.\footnote{Similar approaches to object assignment have been already suggested. See, for example, \cite{featherstone2014} and \cite{hz1979}.} In our interns assignment problem there are two constraints:

\begin{enumerate}
	\item Each hospital has an upper limit for the number of interns that can be assigned to it. This \textit{capacity constraint} is determined by the Ministry of Health.
	
	\item No student is worse off compared to what he would have got under RSD. This \textit{individual rationality constraint} is enforced by defining the \textit{happiness} of each student from his vector of probabilities, and then requiring that for every student individually the happiness can not decrease by the trading stage (intuitively if this would decrease his happiness the student would not trade)
\end {enumerate}

As for the target function, we want to maximize the total satisfaction of the students after trading. The full description of the constraints and the target function appears in Appendix~\ref{app:A}.

After the optimal probabilities have been acquired, we only need to randomize an allocation according to these probabilities. This, however, is not an easy process, as we want the lottery over valid allocations to respect all the interns' probability allocations simultaneously. Fortunately, the Birkhoff-von Neumann theorem provides a solution to this problem \cite{birkhoff1946, vonneumann1953} . In order to apply the theorem for the allocation problem, we represent the probabilities data we have until this stage with a matrix of size $n \times m$, where the rows are the interns, the columns are the hospitals, and cell $(i,j)$ represents Student~$i$'s probability to be assigned to Hospital~$j$. The theorem ensures that any random assignment of the objects in the rows of the matrix to the objects in the columns can be implemented. Furthermore, Birkhoff's proof provides a constructive algorithm for the implementation \cite{pml1986}. Using an extension of this theorem we can create a lottery which respects the improved probabilities gained by trading \cite{bckm2013}.

\section{The new lottery results}
Figure~\ref{fig:cumulativePlace} depicts the number of students who were assigned to one of their top $k$ choices as a function of $k$. The gap between the two curves shown in the figure (the area under the dashed curve) represents the improvement of the new method compared to RSD. While RSD would assign 203 interns to their first choice hospital, 50 to the second and 59 to the third, the new method assigns 216 interns to their first choice, 84 to their second choice and 70 to the third choice.

\begin{figure}[]
  \caption[Comparison between the old and the new methods]{The number of interns as a function of being assigned one of their top choices. The solid line shows the results of RSD, and the dashed line shows the result of the new method.}
	\centering
		 \includegraphics[scale=0.42]{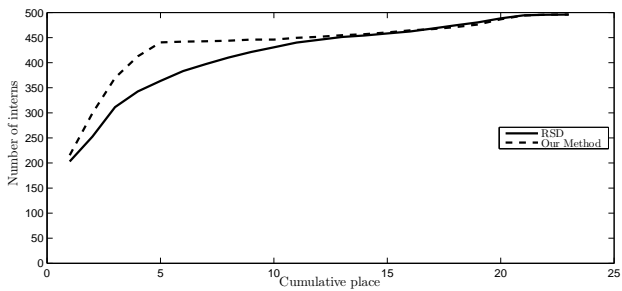}
	\label{fig:cumulativePlace}
\end{figure}

Furthermore, using our data we would like to rate hospitals, according to how high the interns ranked them. Such a rating can be useful, since it gives the hospitals a better picture of their status, and raises a red flag if a specific hospital should improve the way it treats its interns or signals quality to the Ministry of Health. Before we aggregate the data to create such a rating, it is useful to look at the ranking distribution of a specific hospital, and see if it makes sense.

For example, Hadassah Medical Center's ranking distribution is depicted in Figure~\ref{fig:HhadassahRanks}. About $10\%$ of the interns ranked Hadassah as one of their top three choices, which possibly indicates that they are residents of Jerusalem and that location is important to them. The rest of the students ranked Hadassah around the middle of their rank order list, suggesting that interns consider Hadassah to be a good hospital, and that the demand for being an intern there is quite high.
\begin{figure}[]
  \caption[Hadassah hospital's rankings]{The number of interns who ranked \textit{Hadassah} hospital in each place.}
	\centering
		 \includegraphics[scale=0.45]{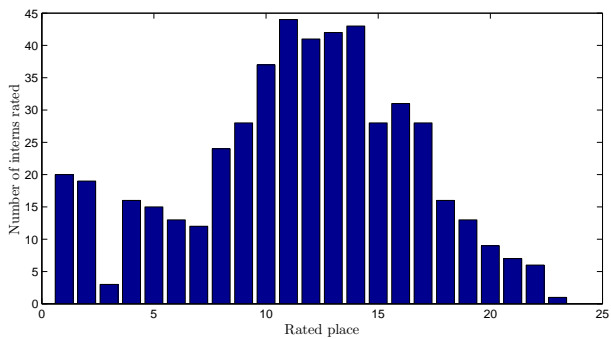}
	\label{fig:HhadassahRanks}
\end{figure}

We now aggregate the rankings of all students, to create a rating across hospitals. We compare two methods of rating the hospitals. The first method is to use the same weights that we used when defining students' happiness, as described in Eq.~\ref{eq:happinessBefore}.\footnote{The rating we get using this method is very similar to the rating one gets using the more familiar Borda Count method.} The second method is the traditional rating of hospitals in this lottery, which is based on the number of interns who ranked a specific hospital as their first choice. Figure~\ref{fig:Ranks} demonstrates that our new rating approach provides very different results from the traditional rating approach, and it is perhaps more advisable to use it as it takes into account the entire rank order lists of all students.

\begin{figure}[]
  \caption[Comparison between two rating approaches]{A comparison between two rating approaches. \textbf{[A]} The number of interns who
	ranked the given hospital as their first choice. \textbf{[B]} The calculated value of the given
	hospital according to the survey-based weights.}
	\centering
	\includegraphics[scale=0.45]{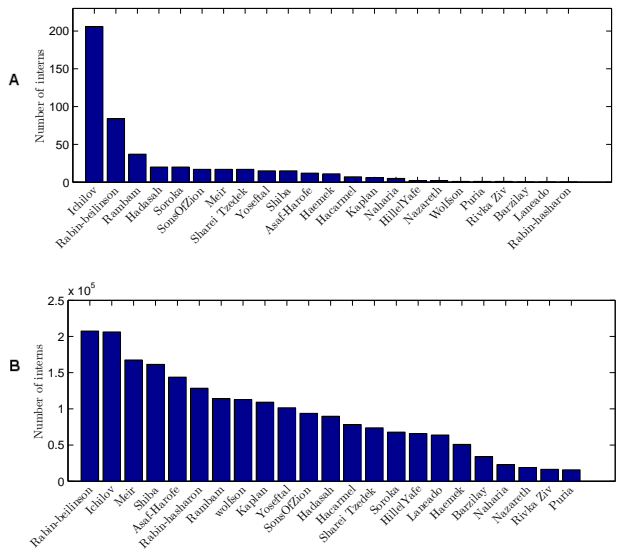}
	\label{fig:Ranks}
\end{figure}

Comparing the two ratings, one can see several differences:
\begin{enumerate}
\item   In the traditional approach, Rabin-Hasharon comes out last, although Rabin-Beilinson came in second. The reason that this happens, is that every student rates Rabin-Belinson above Rabin-Hasharon, so Rabin Hasharon is never first.
    However, the difference is very small - they are both campuses of the Rabin Hospital, and are 10 minutes apart.\footnote{Previously they appeared together in the internship forms under "Rabin"  and a second lottery was performed between the students to see which student goes to which campus} In the new rating, Belinson comes in first (but almost at a tie with Ichilov), and Hasharon comes sixth.
\item The last hospitals in the traditional rating have very low scores, and a single student who changes his vote can change the rating. The new rating is much more robust.
\item The decay in the score makes more sense in the second rating. It is never too sharp, and there is no huge difference between the first two places.
\end{enumerate}

We take the new rating as further evidence that the values at which the algorithm trades probabilities make sense.

\section{Couples in the assignment problem} \label{two:body}

This section will deal with the students internship committee's second condition, and how we were able to comply with it. In the past, any two students were allowed to declare that they are a ``couple'' and be matched together (that is, to the same hospital) by receiving only one joint turn in the Random Serial Dictatorship. The committee required us to give couples the same option under the new algorithm. Similar to several other algorithms for assignment of indivisible objects (e.g., \cite{hz1979,bm2001}), our algorithm ``trades'' probabilities among participants (based on their preferences), and reaches a stochastic matrix that has to be decomposed into a convex combination of valid assignments. The last step uses the famous Birkhoff-von Neumann decomposition process \cite{birkhoff1946,vonneumann1953}. The fact that couples cannot be split imposes (hard) complementarity constraints, which invalidates some of the assignments, and consequently the decomposition process may become harder or even impossible. On the bright side, only stochastic matrices in which both members of the couple get exactly the same probability for each hospital are candidates for decomposition. Is this little advantage enough to make the problem solvable? And if not, what could be done?

In this work we first show that not only is the problem often not solvable when there are couples, but also that it is NP-hard to determine whether a given stochastic matrix with couples can be decomposed (Theorem~\ref{th:np_hard}). This result extends trivially to environments with more general complementarities between groups of participants.

Our second main contribution is in bypassing this impossibility result by providing a polynomial time approximation algorithm that outputs a convex combination of valid assignments that is similar to the target stochastic matrix. We show that the approximation is tight and behaves like $2/\underline{q}$, where $\underline{q}$ is the capacity of the smallest hospital participating in the match. In the last section of the thesis we consider several extensions of the algorithm that can be useful for other applications. We also use anonymized preferences data provided by the MoH to subsample and test our algorithm performance. We find that our algorithm often performs significantly better on actual data (compared to the theoretical bounds we establish).

\subsection{Our Results} \label{subsec:our_results}

\begin{enumerate}
\item It is NP-hard to determine whether a stochastic matrix with couples can be decomposed into a convex combination of deterministic assignments with couples.
\item It is possible to approximate such a decomposition in polynomial time. The approximation is tight and its quality is reciprocal to the smallest hospital' capacity.
\item Subsampling real preferences data from the Israeli Medical Internship Match for which the approximation was developed shows excellent performance of the approximation algorithm.
\end{enumerate}

\subsection{Related Work} \label{subsec:related_work}
In the past two decades, mechanism  design has been applied in ubiquitous contexts and markets. The first and foremost is auctions \cite{ashlagi2009ascending,ashlagi2010position,aumann2014auctioning,chen2010robust,
hassidim2011non,DBLP:conf/wine/HassidimKMN12}, but other examples include division of goods in a fair way
(e.g. cake cutting 
\cite{aumann2013computing,segal2015waste,segal2015envy,segal2014fair}
studying the effect and utilization of networks 
\cite{danna2012upward,hassidim2013network,sina2015adapting}, matching markets
\cite{DBLP:journals/ior/AshlagiBH14,hajaj2015strategy}, and other topics in social choice
\cite{ashlagi2010monotonicity,ajtai2009sorting,azaria2013movie,lutomirskibreaking,
sofer2012negotiation,hassidim2014local,hassidim2014approximate,
cook2013grow,farhi2012quantum}.

The assignment problem with couples is a specific case of multi-unit allocation of indivisible objects without transfers (for example, course allocation), and we indeed intend for our treatment to convey a message regarding the more general case. \cite{budish2011} provides a deterministic algorithm to find an \textit{ex-post} efficient allocation that represent an approximately competitive equilibrium from approximately equal incomes. \cite{bckm2013} suggest a decomposition in the spirit of Birkhoff and von-Neumann, but complementarities are not allowed. A contemporary working paper by \cite{npv2014} is the closest to ours, and it solves the problem of allocating bundles of indivisible objects by assuming (like we do) that bundles are limited in size, and that the capacity constraints are ``soft''. While we insist on capacities being met, allowing deviations as those of Nguyen et al. and then correcting them in a smart way will give a result similar to our approximation result (Theorem~\ref{th:approximationAlgorithm}, see also Section~\ref{sec:extensions}). \cite{an2014} suggest a different kind of approximation, by ignoring the couples constraint with a small probability.

The motivation for decomposing stochastic assignment matrices comes from mechanisms that efficiently allocate probabilities in the interim stage. Notable contributions for the case of single-item assignments are \cite{hz1979} who proposes a competitive equilibrium from equal incomes approach, \cite{bm2001} who suggest the Probabilistic Serial mechanism which is ordinally efficient (i.e., interim efficient given ordinal preferences and not cardinal preferences), and \cite{featherstone2014} who studies rank-efficient mechanisms (similar in spirit to the one implemented by us for the Israeli Medical Internship Match). We note that several authors have dealt with the case of object assignment when objects are the endowment of some agents, in which case the top-trading cycles (or a variation thereof) is often the mechanism of choice (see, e.g., \cite{ss1974,as2003}).

Finally the effect of couples in matching problems has also been studied in the context of two-sided matching. See, for example, the works by \cite{kpr2013} and \cite{abh2014}. A treatment more related to our current approach is provided by \cite{nv2014}.

\subsection{Model and Notation} \label{sec:model}
An assignment problem with couples is a tuple $\left(S, C, H, M \right)$.  $S$ is a finite set of single interns, $C$ a finite set of couples of interns, with each $c \in C$ being a set of two interns $c = \{c_1,c_2\}$. We denote by $I = S \cup \left(\bigcup_{c \in C} c\right)$ the set of all interns. $H$ is a finite set of hospitals. $M \in [0,1]^{I \times H}$ is \textit{the target matrix} and it satisfies
\begin{enumerate}
\item $\forall i \in I: \sum_{h \in H} M_{i,h} = 1$, and
\item $\forall (c_1,c_2) \in C, h \in H: M_{c_1,h} = M_{c_2,h}$.
\end{enumerate}
We let $q_h = \sum_{i \in I} M_{i,h}$ be the capacity of hospital $h$, and $\underline{q} = \min_h q_h$ be the capacity of the smallest hospital. Let $\mathcal{P}$ be the domain of all assignment problems with couples. We specify two special sub-domains of $\mathcal{P}$: $\mathcal{P}^{C = \emptyset}$ is the set of problems without couples (i.e., $C = \emptyset$), and $\mathcal{P}^{S \geq C}$ is the set of problems in which singles receive more weight than couples in \textit{every} hospital (i.e., $\forall h \in H: \sum_{s \in S} M_{s,h} \geq 2 \sum_{c \in C} M_{c_1,h}$).

A matrix $M' \in [0,1]^{I \times H}$ that satisfies conditions (1) and (2), and $\forall h \in H: \sum_{i \in I} M'_{i,h} = q_h$ is called \textit{a stochastic assignment matrix (with respect to $P$)}. If all the elements of a stochastic assignment matrix (with respect to $P$) are in $\{0,1\}$, then the matrix is called \textit{a deterministic assignment matrix (with respect to $P$)}. A stochastic assignment matrix $M'$ is \textit{decomposable} if it can be represented as a convex combination of deterministic assignment matrices, i.e., $\left\{\left(\lambda^k, M^k\right)\right\}_{k=1}^K$ such that for each $k$, $\lambda^k \in (0,1)$ and $M^k$ is a deterministic assignment matrix, $\sum_{k=1}^K \lambda^k = 1$, and $M' = \sum_{k=1}^K \lambda^k M^k$. A straightforward extension of the Birkhoff-von Neumann theorem shows that on the sub-domain $\mathcal{P}^{C = \emptyset}$ all target matrices are decomposable. In Section~\ref{sec:np_hard} we show that on the general domain, $\mathcal{P}$, it is NP-hard to verify whether a target matrix is decomposable.

We say that a convex combination of deterministic assignment matrices $\left\{\left(\lambda^k, M^k\right)\right\}_{k=1}^K$ \textit{$\epsilon$-approximates} a target matrix $M$ if
\begin{align*}
\left|\left|| M - \sum_{k=1}^K \lambda^k M^k|\cdot\vec{\mathbf{1}}_{[I]}\right|\right|_{\infty} < \epsilon,
\end{align*}
where $\vec{\mathbf{1}}_{[I]}$ is a vector of ones of size $I$, so we approximate the maximum value over the $L_1$ norm of each row.

A decomposition algorithm $\mathcal{A}$ takes a problem $P \in \mathcal{P}$ and outputs a convex combination of deterministic assignment matrices (with respect to $P$). We say that $\mathcal{A}$ provides an \textit{$f$-approximation on domain $\mathcal{P}'$} if for every $P \in \mathcal{P}'$, $\mathcal{A}(P)$ $f(P)$-approximates $M$. Our aim in Section~\ref{sec:approximation} is to provide a lower bound for the optimal approximation on $\mathcal{P}$, and an upper bound for the optimal approximation on $\mathcal{P}^{S \geq C}$.

\subsection{NP-hardness result} \label{sec:np_hard}
\begin{theorem} \label{th:np_hard}
On the domain $\mathcal{P}$, determining whether the target matrix is decomposable is in NPC.
\end{theorem}

\begin{proof}
We will reduce 3-edge coloring of cubic graphs to the interns-hospitals
assignment problem, (\cite{holyer1981}, proved that 3-edge coloring of
cubic graphs is in NPC).

Let $G=(V,E)$ be a cubic graph with $\left|V\right|=n$ vertices and $m=3n/2$ edges.
For each edge $e$ let us have $3$ hospitals $A(e), B(e), C(e),$ each of capacity
$2$. For each vertex $v$ and each color $i \in \{1,2,3\}$  let us have a
hospital $(v,i)$ with capacity $1$. Thus altogether we have $3m$ hospitals
of capacity $2$ and $3n$ hospitals of capacity $1$, total hospital capacity
is $3n+6m=3n+6 \cdot{3n/2}=12n$.

Now for each edge $e=(u,v)$ we have one couple that wants to be
either in $A(e)$ or in $B(e)$ or in $C(e)$, each with probability $1/3$
and we also have $6$ single interns as follows:

\begin{itemize}
\item First single wants either $A(e)$ with probability $2/3$ or $(u,1)$ with probability $1/3$.
\item Second wants either $A(e)$ with probability $2/3$ or $(v,1)$ with probability $1/3$.
\item Third wants either $B(e)$ with probability $2/3$ or $(u,2)$ with probability $1/3$.
\item Forth wants either $B(e)$ with probability $2/3$ or $(v,2)$ with probability $1/3$.
\item Fifth wants either $C(e)$ with probability $2/3$ or $(u,3)$ with probability $1/3$.
\item Sixth wants either $C(e)$ with probability $2/3$ or $(v,3)$ with probability $1/3$.
\end{itemize}

Note that altogether we have $(2+6)m=8 \cdot{3n/2}=12n$ interns. Hence, any
assignment of all of them  uses all capacity of all hospitals.

Suppose first that $G$ is class $1$ (that is, it is 3-edge colorable). Fix a
proper 3-edge coloring by colors $1,2,3$. Each such coloring corresponds
to a full assignment of all interns as follows:

If edge $e=(u,v)$ is colored $1$ then the couple is in $A(e)$, First single
intern is in $(u,1)$, second in $(v,1)$, third and forth in $B(e)$, fifth and
sixth in $C(e)$. If $e$ is colored $2$ and $3$ we proceed symmetrically in the
obvious way. It's clear that this is a full assignment.

Now give this assignment weight $1/3$, and give weight $1/3$ to each of the
two other assignments obtained from it by cyclically shifting the colors
of all edges.  This gives a decomposition of our matrix of probabilities.

Conversely, if there is a decomposition, then any single assignment
in its support must be a full assignment (assigning all interns and
saturating all hospitals). But this means that for each edge $e$, among
the $8$ interns corresponding to this edge, $6$ including the couple are
assigned to the hospitals $A(e),B(e),C(e)$ and there is a unique $i$
$\in\{1,2,3\}$ so that two of the single interns are assigned to $(u,i)$
and $(v,i)$ and thus $G$ is 3-edge colorable, as needed.
\end{proof}

\subsection{Approximation} \label{sec:approximation}

\subsubsection{Lower bound}

\begin{theorem}
\label{th:lowerBound}
If $\mathcal{A}$ provides an $f$-approximation on $\mathcal{P}^{S \geq C}$, then for any $n \in \mathbb{N}$ there exists $P \in \mathcal{P}^{S \geq C}$ such that $\left|I\right| \geq n$ and $f(P) \geq \frac{2}{\underline{q} + 2}$.
\end{theorem}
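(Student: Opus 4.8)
The plan is to exhibit, for each target value $Q$ of the smallest capacity, an explicit family of problems in $\mathcal{P}^{S \geq C}$ on which no decomposition can be accurate, and then to inflate the number of interns by harmless padding. I would fix $Q$, build a single \emph{critical} hospital $g$ of capacity $Q$, and arrange every remaining hospital to have capacity at least $Q$, so that $\underline{q} = Q$ in the constructed instance. The whole burden then falls on showing that the couple's indivisibility forces an $L_1$ error of at least $\tfrac{2}{Q+2}$ on some single row attached to $g$. To reach $|I| \geq n$ I would simply add arbitrarily many singles that each deterministically want a spare high-capacity hospital; these interns are trivially satisfiable, do not change $\underline{q}$, and do not interact with the gadget, so they cannot affect the forced error. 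Thus it suffices to produce one rigid gadget around $g$.

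For the gadget I would make the couple compete for the ``free'' seats of $g$ against a designated single whose target puts it at $g$, while the other seats of $g$ are pinned by committed interns, and the whole instance is made fully packed so that no seat anywhere has slack. The point of full packing is that relieving the couple (moving it out of $g$ to honor its target) necessarily pushes a committed intern off its wanted hospital, and that displacement cannot be absorbed for free. This mirrors the capacity-two case, where $g$ has exactly the couple's size: there the couple and the extra single satisfy $(\text{couple at } g) + (\text{single at } g) \leq 1$, so if the single insists on $g$ with probability near one, the couple's probability at $g$ is squeezed away from its target, and balancing the two unavoidable deviations yields exactly $\tfrac12 = \tfrac{2}{\underline{q}+2}$. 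I would lift this to capacity $Q$ by letting the couple contend against a rigid block of $Q$ interns tied to $g$ through a chain of saturated hospitals, so that the couple's $2$-seat atom produces a single unit of over-demand that must be shed by one of the $Q+2$ interns attached to $g$.

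Having fixed the instance I would lower bound the error by a counting argument over the $Q+2$ rows attached to $g$. In every deterministic assignment in the support, $g$ holds exactly $Q$ interns while the couple's atomicity and the committed block make it impossible to seat all $Q+2$ of them according to their targets; summing the per-assignment $L_1$ shortfalls and taking expectations over the convex weights should show that the total deviation spread over these $Q+2$ rows is at least $2$, hence $\max_i \lVert\cdot\rVert_1 \geq \tfrac{2}{Q+2}$. Choosing the couple's target probability at the ``mid-gap'' value makes this tight. Throughout I must check membership in $\mathcal{P}^{S \geq C}$: at $g$ the committed block and the designated single contribute enough single-mass to dominate twice the couple-mass (the couple occupies only a $2/Q$ fraction of $g$), and at every padded hospital singles dominate trivially; I would also verify the column sums are integral so that deterministic assignments with the right capacities exist at all.

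The main obstacle is precisely the un-spreadability. Under the $\max$-over-rows, $L_1$-per-row norm an algorithm is free to dilute a forced deficit across many interns, driving the per-row error to zero, so a naive gadget (for instance two hospitals of capacity $Q$ with many symmetric singles) only yields a floor of order $1/Q$, which is weaker than $\tfrac{2}{Q+2}$. The delicate part is therefore to certify that the deficit \emph{cannot} leak out of the $Q+2$ contested rows, i.e.\ that any re-routing which helps the couple transfers an equal deficit back onto one of those same rows; this is what full packing together with the saturated chain is meant to guarantee. I expect to establish this rigidity either combinatorially (tracking the unique cascade that a couple move triggers along the saturated chain and showing it terminates inside the gadget) or via LP duality (exhibiting a separating weighting of the $Q+2$ rows that lower bounds the optimal decomposition error by $\tfrac{2}{Q+2}$). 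Getting the constant to be exactly $\tfrac{2}{Q+2}$, rather than merely $\Theta(1/Q)$, is where the couple's factor-of-two block size must be used carefully.
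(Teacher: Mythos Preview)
Your diagnosis of the difficulty is partly right but your cure is far more elaborate than what is needed, and you have dismissed precisely the construction that works. You say a ``naive gadget (for instance two hospitals of capacity $Q$ with many symmetric singles) only yields a floor of order $1/Q$,'' and therefore reach for saturated chains, propagation of displacement, and LP duality to enforce ``un-spreadability.'' But the paper's proof \emph{is} the two-hospital gadget, with one extra idea: instead of filling each hospital entirely with singles, fill roughly half the seats with \emph{fixed} couples (probability $1$ at their hospital) and leave only $q/2+1$ singles per side, together with one floating couple $c_*$ at $1/2$--$1/2$. The point is not to prevent the deficit from being spread; it is to cap the number of rows it can be spread over. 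Since $\mathcal{P}^{S\ge C}$ forces at least half the mass at each hospital to be single mass, $q/2+1$ is essentially the minimum number of singles one can have, and spreading one forced unit of displacement over those $q/2+1$ rows, with weight $1/2$, yields exactly $\tfrac{2}{q+2}$.

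The ``rigidity'' you are looking for is obtained for free by giving every intern except $c_*$ a probability-$1$ target: any displacement of such an intern is pure $L_1$ error, so there is no cheap re-routing and no need for chains. Your worry that the algorithm might dilute the deficit outside the $Q+2$ contested rows does not arise, because moving a fixed couple out (two seats) forces bringing in someone from the other side (one more error row), which is strictly worse for the adversary than displacing a single. In short, the missing idea is not a rigidity gadget but simply padding each hospital with deterministic couples so that the single-mass is pushed down to the $\mathcal{P}^{S\ge C}$ boundary; once you see that, the whole proof is a two-hospital parity count.
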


\begin{proof}
Given $n \in \mathbb{N}$, let $q = 4 \lceil \frac{n}{8} \rceil$. Let $S = \left\{s_1,\dots,s_{\frac{1}{2}q+1},s'_1,\dots,s'_{\frac{1}{2}q+1} \right\}$, $C = \left\{ c_*, c_1,\dots,c_{\frac{1}{4}q-1},c'_1,\dots,c'_{\frac{1}{4}q-1} \right\}$, and $H = \left\{h, h'\right\}$. Consider the target matrix $M \in \mathcal{P}^{S \geq C}$ described in Table~\ref{tab:example_singles}. Under this stochastic assignment matrix each single intern in $\left\{s_1,\dots,s_{\frac{1}{2}q+1} \right\}$ and each couple in $\left\{c_1, \dots, c_{\frac{1}{4}q-1} \right\}$ gets a probability $1$ to be assigned to $h$, each single intern in $\left\{s'_1,\dots,s'_{\frac{1}{2}q+1} \right\}$ and each couple in $\left\{c'_1,\dots, c'_{\frac{1}{4}q-1} \right\}$ gets a probability of $1$ to be assigned to $h'$, and the couple $c_*$ gets an equal probability to be assigned to either $h$ or $h'$. Note that the capacity of both hospitals is $q$.

\begin{table}[htbp]
	\caption{Target matrix for proof of Theorem~\ref{th:lowerBound}}
	
	\centering
		\begin{tabu}{|c|c|c|}
			\hline
			\textbf{} & \textbf{h} & \textbf{h'} \\ \hline
			
			$s_1$ & 1 & 0 \\ \hline
			\vdots & 1 & 0 \\ \hline
			$s_{\frac{1}{2}q + 1}$ & 1 & 0 \\ \tabucline[1pt]{-}
			
			$s'_1$ & 0 & 1 \\ \hline
			\vdots & 0 & 1 \\ \hline
			$s'_{\frac{1}{2}q + 1}$ & 0 & 1 \\ \tabucline[1pt]{-}

			$c_{*,1}$ & 0.5 & 0.5 \\ \tabucline[1pt]{-}
			$c_{*,2}$ & 0.5 & 0.5 \\ \tabucline[1pt]{-}
			
			$c_{1,1}$ & 1 & 0 \\ \hline
			$c_{1,2}$ & 1 & 0 \\ \hline			
			\vdots & 1 & 0 \\ \hline
			$c_{\frac{1}{4}q-1,1}$ & 1 & 0 \\ \hline
			$c_{\frac{1}{4}q-1,2}$ & 1 & 0 \\ \tabucline[1pt]{-}
			
			$c'_{1,1}$ & 0 & 1 \\ \hline
			$c'_{1,2}$ & 0 & 1 \\ \hline			
			\vdots & 0 & 1 \\ \hline
			$c'_{\frac{1}{4}q-1,1}$ & 0 & 1 \\ \hline
			$c'_{\frac{1}{4}q-1,2}$ & 0 & 1 \\ \hline
		\end{tabu}
		\label{tab:example_singles}
		
\end{table}

In any convex combination that approximates $M$, one of the hospitals will be assigned at least $\frac{1}{4}q$ of the couples with a weight of at least $\frac{1}{2}$. This means that with probability $\frac{1}{2}$ the left over capacity for single interns is at most $\frac{1}{2}q$, where as the single interns require $\frac{1}{2}q+1$. The optimal way to minimize the deviation from the singles probability would be to divide the deviation equally among all the relevant single interns. This means that each individual intern in the relevant hospital will get at most $\frac{1}{2} \cdot 1 + \frac{1}{2} \cdot \frac{\frac{1}{2}q}{\frac{1}{2}q+1}$ instead of getting $1$, and to complete the individual's total probability to $1$, the single will get $1-(\frac{1}{2} \cdot 1 + \frac{1}{2} \cdot \frac{\frac{1}{2}q}{\frac{1}{2}q+1})$ to the other hospital (instead of $0$). Therefore the approximation cannot be better than:
\[
2\left (1 - \frac{1}{2}\cdot 1 - \frac{1}{2} \cdot \frac{\frac{1}{2}q}{\frac{1}{2}q+1}\right)= \frac{2}{q+2} \left(= \frac{2}{\underline{q}+2} \right)
\]
\end{proof}

One may argue that the example provided in the proof for Theorem~\ref{th:lowerBound} seems quite pathological. Indeed, the target matrix allocates many interns either $h$ or $h'$ with probability $1$. There could be scenarios in which it would be very reasonable to consider only target matrices in which each intern's probability of reaching any particular hospital is bounded by some expression related to the number of hospitals (see also the concluding discussion). Nevertheless, in our main application the chosen algorithm involves trading probability between interns, and so it often outputs target matrices that are close to being pathological in exactly the same sense. This motivates our approximation metric and the consideration of extreme cases. More importantly, a similar bound can be achieved even if probabilities are restricted to being small, although the problems are not in the domain $\mathcal{P}^{S \geq C}$ (see Appendix~\ref{apndx:small_probs}). Finding a lower bound in the domain $\mathcal{P}^{S \geq C}$ when probabilities are constrained to be ``small'' remains an open problem.

\subsubsection{Suggested algorithm for upper bound}
We present an approximation algorithm for decomposing any matrix $M \in \mathcal{P}$. The algorithm can be roughly divided into two main stages: the first stage deals only with couples and assigns them to hospitals according to their probability in $M$, and the second stage 'fixes' the probabilities of the single interns in $M$ to match the assignment of the couples from the first stage. In the second stage the over-demanded capacity taken by the couples in any specific hospital is deducted from singles' demand according to a division which preserves the weight of each single intern in the singles' demand.

The algorithm starts by ``enlarging'' each hospital capacity to the sum of the couples probabilities to be assigned to $h$ divided by two and rounded up to the nearest integer. The couples now behave like singles, and for each hospital we add a new single to take the ``extra'' capacity (the new single's demand is completed by a dummy hospital, $h_\emptyset$). Doing so for all hospitals gives us a new stochastic assignment matrix containing only ``singles'' (with each single representing an original couple), and it can be decomposed using the Birkhoff-von Neumann theorem, to get a convex combination of deterministic assignment matrices that represent the allocation of the couples. In some of these assignment the couples take slightly more capacity than their expected share (but only up to the nearest multiple of two).

In the second stage of the algorithm, we look at the residuals of the first-stage assignment. For each couples assignment and for each hospital we check whether the couples exceeded their expected share in that hospital, and if so, we let all the singles reduce their demand by their respective weights. The missing demand is directed at a dummy hospital, $h_\emptyset$. We again have a stochastic assignment matrix containing only singles that can be decomposed using the Birkhoff-von Neumann theorem.\footnote{Strictly speaking, the demand needs to be rounded up to the nearest integer, and this can be done again by adding dummy interns for each hospital, and putting all the leftover demand in the dummy hospital.} The output of the decomposition assigns some of the singles to the dummy hospital, so we move them to vacant positions arbitrarily. The algorithm then combines back the first-stage and the second-stage assignments to get a valid deterministic assignment matrix with respect to the original problem.

The full algorithm's pseudo-code can be found in Algorithm~\ref{alg:approximation}.

{
\begin{algorithm}
\caption{Approximation}\label{alg:approximation}
Input: A stochastic assignment matrix $M^{I \times H}\in \mathcal{P}$
\begin{algorithmic}[1]
\State $S' = C \cup H$, $H' = H \cup \{h_\emptyset\}$
\State Create a new matrix $M' \in [0,1]^{|S'| \times |H'|}$ (initialize with zeros)
\ForAll {$h \in H$}
	\ForAll {$c \in C$}
		\State $M'_{c,h} = M_{c_1,h}$
	\EndFor
	\State $M'_{h,h} = \lceil \sum_{c \in C} M'_{c,h} \rceil - \sum_{c \in C} M'_{c,h}$
	\State $M'_{h,h_\emptyset} = 1 - M'_{h,h}$
\EndFor
\Comment We get that $(S',\emptyset,H',M') = P' \in \mathcal{P}^{C = \emptyset}$

\State Decompose $M'$ into a convex combination of deterministic assignment matrices $\left\{\left(\lambda^k, M^k\right)\right\}_{k=1}^K$ (with respect to $P'$)
\State Create an empty set of allocations $\psi=\{\}$
\For{$k=1$ to $K$}
	\State $\tilde{H} = H \cup \{h_\emptyset\}$
	\State Create a new matrix $\tilde{M} \in [0,1]^{|S| \times |\tilde{H}|}$ (initialize with zeros)
	\ForAll {$s \in S$}
		\ForAll {$h \in H$}
			\If {$\sum_{c \in C} M^k_{c,h} > \sum_{c \in C} M'_{c,h}$}
			\Comment If couples exceeded their quota
				\State $\tilde{M}_{s,h} = M_{s,h} - \frac{M_{s,h}}{\sum_{s' \in S} M_{s',h}} \cdot 2 \left(\sum_{c \in C} M^k_{c,h} - \sum_{c \in C} M'_{c,h} \right)$
			\Else
				\State $\tilde{M}_{s,h} = M_{s,h}$
			\EndIf
		\EndFor
		
		\State $\tilde{M}_{s,h_\emptyset} = \sum_{h \in H} M_{s,h} - \sum_{h \in H} \tilde{M}_{s,h}$
	\EndFor
	\Comment We get that $(S, \emptyset, \tilde{H}, \tilde{M}) = \tilde{P} \in \mathcal{P}^{C = \emptyset}$

	\State Decompose $\tilde{M}$ into a convex combination of deterministic assignment matrices $\left\{\left(\hat{\lambda}^l, \hat{M}^l\right)\right\}_{l=1}^{L}$ (with respect to $\tilde{P}$)
	
	\For {$l = 1$ to $L$}
		\State Stitch $M^k$ and $\hat{M}^l$ into a valid deterministic assignment matrix $M^{k,l}$ for $P$. Couples get the hospitals they are assigned under $M^k$, singles not assigned to $h_\emptyset$ get the hospitals they are assigned under $\hat{M}^l$, and singles assigned to $h_\emptyset$ get the rest of the vacant positions (arbitrarily)
		\State Add $\left(\lambda^k \cdot \hat{\lambda}^l, M^{k,l} \right)$ to $\psi$
	\EndFor
\EndFor
\State Output $\psi$
\end{algorithmic}
\end{algorithm}
}

\begin{theorem}
\label{th:approximationAlgorithm}
Algorithm~\ref{alg:approximation} provides an $\bar{f}$-approximation on $\mathcal{P}^{S \geq C}$, where $\bar{f}(P) = 2/\underline{q}$.
\end{theorem}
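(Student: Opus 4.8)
The plan is to split the approximation error into a couples part and a singles part, show the couples part is exactly zero, and bound the singles part by $2/\underline{q}$. First I would observe that the couples are reproduced \emph{exactly}. Every stitched assignment $M^{k,l}$ sends both members of a couple $c$ to the hospital that $c$ occupies under $M^k$, and the first-stage decomposition satisfies $\sum_k \lambda^k M^k_{c,h} = M'_{c,h} = M_{c_1,h}$; hence the rows of $\bar M := \sum_{k,l}\lambda^k\hat\lambda^l M^{k,l}$ indexed by couple members coincide with $M$, so $|M-\bar M|\cdot\vec{\mathbf{1}}_{[I]}$ vanishes on all couple rows. It remains to bound the $L_1$ error of each single row.

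Next I would route the single error through the dummy hospital. Since the stitching step only ever moves a single off $h_\emptyset$ onto a real hospital, one has $\bar M_{s,h}\ge \sum_k\lambda^k\tilde M^{(k)}_{s,h}$ for every real $h$, where $\tilde M^{(k)}$ is the stage-two matrix built for the $k$-th couples-assignment. Because $\tilde M^{(k)}_{s,h}\le M_{s,h}$ and both $M_{s,\cdot}$ and $\bar M_{s,\cdot}$ are probability vectors, $\|M_{s,\cdot}-\bar M_{s,\cdot}\|_1 = 2\sum_h (M_{s,h}-\bar M_{s,h})^+ \le 2\sum_k\lambda^k\tilde M^{(k)}_{s,h_\emptyset}$, i.e.\ twice the expected probability that stage two diverts $s$ to $h_\emptyset$. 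Reading this off the algorithm, the diverted mass equals $\sum_{h\in H} M_{s,h}\cdot\frac{2\,\mathbb E[(N_h-\mu_h)^+]}{\sum_{s'\in S}M_{s',h}}$, where $N_h:=\sum_{c\in C}M^k_{c,h}$ is the (random, over the first-stage lottery) number of couples placed in $h$ and $\mu_h:=\sum_{c\in C}M'_{c,h}=\mathbb E[N_h]$.

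The crux, and the step I expect to be the main obstacle, is bounding $\mathbb E[(N_h-\mu_h)^+]$. The key structural fact is that the first-stage capacity of $h$ is exactly $\lceil\mu_h\rceil$ and is always saturated by the couples placed in $h$ together with the single filler row for $h$ (no other row of $M'$ charges column $h$); therefore in every deterministic $M^k$ one has $N_h\in\{\lceil\mu_h\rceil-1,\lceil\mu_h\rceil\}$. A variable supported on two consecutive integers with mean $\mu_h$ obeys $\mathbb E[(N_h-\mu_h)^+]=t(1-t)\le \tfrac14$, with $t=\lceil\mu_h\rceil-\mu_h$. I would then invoke the domain assumption: on $\mathcal{P}^{S \geq C}$ we have $q_h=\sum_{s}M_{s,h}+2\mu_h$ with $\sum_s M_{s,h}\ge 2\mu_h$, so $\sum_{s'}M_{s',h}\ge q_h/2\ge\underline{q}/2$. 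Hence each summand satisfies $\frac{2\,\mathbb E[(N_h-\mu_h)^+]}{\sum_{s'}M_{s',h}}\le\frac{1/2}{\underline{q}/2}=\frac1{\underline{q}}$, and since $\sum_{h}M_{s,h}=1$ the diverted mass is at most $1/\underline{q}$ and the single $L_1$ error at most $2/\underline{q}$.

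Finally I would dispatch two technical points. The stage-two demands must be rounded up to integers before Birkhoff--von Neumann applies; following the footnote I would absorb the rounding with dummy interns and leftover demand at $h_\emptyset$, noting this alters none of the real singles' marginals used above. I would also verify that reassigning $h_\emptyset$-singles to vacant seats is consistent (the capacity freed at each hospital matches the surplus of singles), so that $\bar M$ really is a convex combination of deterministic assignment matrices for $P$ and the inequality $\bar M_{s,h}\ge\sum_k\lambda^k\tilde M^{(k)}_{s,h}$ is legitimate.
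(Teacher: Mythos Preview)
Your proposal is correct and follows essentially the same route as the paper's proof: couples are matched exactly in stage one; for singles the entire error is routed through the mass diverted to $h_\emptyset$; the key structural fact is that $N_h$ is supported on $\{\lceil\mu_h\rceil-1,\lceil\mu_h\rceil\}$, which bounds the expected excess by $t(1-t)\le\tfrac14$; and the domain assumption $\sum_{s'}M_{s',h}\ge q_h/2$ turns this into a per-single loss of at most $1/\underline{q}$, doubled to $2/\underline{q}$.

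Your packaging is in fact cleaner than the paper's. The paper computes the same quantities but phrases the doubling as ``loses $1/\underline{q}$, then the arbitrary reassignment adds another $1/\underline{q}$''; your use of the identity $\|p-q\|_1 = 2\sum_h(p_h-q_h)^+$ together with $\bar M_{s,h}\ge\sum_k\lambda^k\tilde M^{(k)}_{s,h}$ makes the factor of~$2$ immediate and avoids any ambiguity about how the reassigned mass is accounted for. The paper's intermediate arithmetic for the ``total loss'' (its $y_h$ and $2-x_h$ expressions) is actually somewhat garbled, though it lands on the right value $\tfrac12$; your $\mathbb E[(N_h-\mu_h)^+]=t(1-t)\le\tfrac14$ is the correct and transparent version of that step.
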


\begin{lemma}
\label{le:polynomialTime}
Algorithm~\ref{alg:approximation} runs in polynomial time.
\end{lemma}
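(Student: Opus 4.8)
The plan is to account separately for the cost of each stage of Algorithm~\ref{alg:approximation} and to observe that the total output size, hence the total work, is bounded by a polynomial in $|I|+|H|$. The workhorse in both stages is the Birkhoff-von Neumann decomposition applied to matrices that lie in $\mathcal{P}^{C=\emptyset}$, so I would first record the two facts I need about it: (i) a constructive BvN decomposition of an $r \times t$ matrix with integer capacities runs in polynomial time, since each iteration finds a deterministic assignment in its support by a bipartite matching (equivalently a max-flow) computation and then subtracts the minimum support entry, zeroing at least one cell; and (ii) consequently the number of deterministic matrices produced is at most the number of nonzero entries plus one, i.e.\ $O(rt)$. I would also note that the entries stay rational with bit-length polynomial in the input, since every subtracted value is an entry already present.

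With these facts in hand, I would bound the first stage. The matrix $M'$ has dimensions $|S'| \times |H'| = (|C|+|H|) \times (|H|+1)$, so its decomposition runs in polynomial time and yields $K = O\big((|C|+|H|)\,|H|\big)$ deterministic matrices, each found in polynomial time. Inside each of the $K$ iterations of the outer loop, constructing $\tilde{M}$ touches $|S|\cdot(|H|+1)$ entries, each given by a closed-form expression evaluable in polynomial time. The matrix $\tilde{M}$ has dimensions $|S| \times (|H|+1)$, so its decomposition again runs in polynomial time and produces $L = O\big(|S|\,|H|\big)$ deterministic matrices. Here I would remark that rounding each residual demand up to an integer capacity, as in the algorithm's footnote, only adds one dummy intern per hospital and thus enlarges the relevant dimensions by at most a polynomial factor, leaving all the above bounds intact.

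Finally, the inner assembly performs $K \cdot L = O\big((|C|+|H|)\,|H| \cdot |S|\,|H|\big)$ stitching operations in total, and each stitch assembles a deterministic matrix over $I \times H$ and fills the singles sent to $h_\emptyset$ into vacant positions in time polynomial in $|I|+|H|$. Summing the per-stage bounds, both the total running time and the size of the output $\psi$ are polynomial in $|I|+|H|$, which is the desired conclusion.

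The main obstacle, and the step I would take most care with, is the claim that the BvN subroutine both terminates after polynomially many iterations and keeps the numbers polynomially bounded: the count $K$ (and each $L$) must itself be polynomial for the product $K\cdot L$ to remain polynomial, so I would make the ``each iteration zeros at least one entry'' argument explicit and confirm that the capacities handed to the BvN routine are genuinely integers after the rounding described in the algorithm.
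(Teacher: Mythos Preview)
Your argument is correct and follows the same approach as the paper's proof, which simply observes that the outer decomposition yields a polynomial-size convex combination and that each Birkhoff--von Neumann subroutine runs in polynomial time. You have supplied considerably more detail (explicit bounds on $K$ and $L$, the per-iteration zeroing argument, bit-complexity, and the rounding footnote), but the underlying logic is identical.
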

\begin{proof}
The size of the convex combination $\left\{\left(\lambda^k, M^k\right)\right\}_{k=1}^K$ related to the couples is polynomial, also we use Birkhoff's proof which provides a constructive algorithm for the implementation of the decomposition \cite{pml1986} in polynomial time, so the whole algorithm is polynomial.
\end{proof}

\begin{proof}[Proof of Theorem \ref{th:approximationAlgorithm}]
In the first part of the algorithm (lines $1$ through $10$) the couples are allocated exactly their demanded capacity. This is done by taking only the couples, treating them as single agents, enlarging the capacities related to the couples so that they would be an integer, and add a dummy hospital to take all the left-over probabilities of the dummy agents. Then it is possible to use the Birkhoff-von Neumann decomposition.

We note that since each hospital has only one related dummy agent, then the capacity taken by the couples is always either the full (rounded up) capacity of the hospital according to $M'$, or one less than that. If it is the first case, the singles can be allocated their entire demand for this hospital in the next step. However, if the couples exceeded their fractional capacity, we ``fix'' the demand of the single interns, and each intern $i$ loses $2\left( \lceil q'_h \rceil - q'_h \right)$ times his weight in the sum of probabilities of singles, where $q'_h = \sum_{c \in C} M_{c_1,h}$.

Let $y_h$ be the total weight given the matrices of the second kind (for a certain hospital $h$). Then the following holds:
\begin{equation}
\label{eq:secondCaseChance}
(1-y) \cdot \lfloor q' \rfloor + y \cdot \lceil q'_h \rceil = q'_h
\end{equation}
If we denote by $x_h = q'_h - \lfloor q'_h \rfloor$ the fractional part of the couples' total probability we get $y_h = x_h / 2$. Now each time the second case materializes, the singles lose $2 - x_h$. The total loss is given by $(2 - x_h) \cdot \frac{x_h}{2}$, which is maximized at $x_h = 1$. Since we assumed that each hospital contains more singles interns than interns who are part of a couple, each single $s$ losses at most $M_{s,h} \cdot \frac{1}{2} / \frac{q_h}{2} = M_{s,h} / q_h$.

Across all hospitals, each single intern $s$ losses at this point at most $\sum_{h \in H} M_{s,h} / q_h \leq (\sum_{h \in H} M_{s,h}) / \underline{q} = 1/\underline{q}$. However, in the last phase of the algorithm (line 27), singles receive back their ``lost probability'' in arbitrary hospitals, and so each single can deviate up to $1/\underline{q}$ further from her endowment, resulting in a total of at most $2 / \underline{q}$.
\end{proof}

\section{Experimental results} \label{sec:experimental}

In this section we characterize the interns' preferences and provide simulation results using the Israeli Medical Internship Match data provided to us by the Israeli MoH. The data contained the preferences of interns starting from 1995. However, due to significant changes in the hospitals and the internship conditions in them, some of the earlier data cannot be treated as coming from a similar distribution to the more recent data. We therefore focused only on preferences from the years 2010-2014. During those years there were 23 possible hospitals for each intern to rate, and interns \textit{must} rank all hospitals. The number of interns varied from one year to another, but it is always around 500, and the number of internship positions is always equal to the number of interns. Preferences of foreign trained doctors are also available to us, but only until the year 2012.

\subsection{Description of interns' preferences}

Looking at the interns' preferences, there are several interesting points to be made.  The first is that preferences are geographically driven. For example, ranking the hospitals according to the number of students who listed this hospital as their top priority (this is the ranking published by the MoH), we see that in the top five hospitals there are two hospitals from the center, one from the north, one from the south, and one from the Jerusalem area (these are four of the districts in Israel). However, for any given student, such a geographically dispersed ranking is very unlikely. Indeed, except for Ichilov hospital (a hospital in the center of Israel which represents the top option for most of the population), it is common that all the top choices of a student come from the same geographic area.\footnote{It is known that one can always swap Ichilov hospital later for something else. Hence some students pick it first, because they want the flexibility -- the lottery is conducted between ten and eighteen months before the internship starts.} To measure this effect, Figure~\ref{fig:k_same_area} shows the percentage of interns whose top $k$ choices all came from the same area, for $k = 2$ to $10$ (this graph ignores Ichilov hospital). Note that if choices were random, one would expect an exponential decay, where here we see a linear decay.

\begin{figure}[htp]
  \caption[The Israeli and foreign doctors' geographical orientation]{Geographical orientation at the top of the list}
	\centering
		 \includegraphics[scale=0.4]{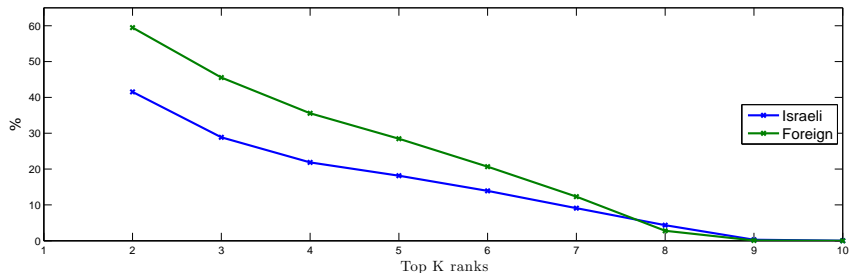}
	\label{fig:k_same_area}
\end{figure}

Another interesting point is the differences between the Israeli interns (interns who studied medicine in Israel) and the foreign trained doctors. Foreign trained graduates have even stronger geographic preferences than Israeli ones. This helps explain another difference between these two populations' preferences: their performance in the old RSD mechanism. Under that mechanism the average rank of hospital that the Israeli interns received was 4.594, whereas the foreign trained doctors (who are matched separately) received an average rank of 2.549. Similarly, under the new mechanism the average rank of Israeli interns was improved to 3.686, and the foreign trained doctors got 2.042. Our hypothesis was that the foreign trained doctors' preferences therefore exhibit more heterogeneity in some sense.\footnote{The common conspiracy theory is that the MoH gives them more seats at the preferable hospitals. We have verified the numbers and the conspiracy theory is incorrect.}

To verify that foreign trained doctors' preferences are more heterogeneous, we grouped preferences according to the top three hospitals, sorted by frequency, and plotted the percentage of students who chose the most common triplet, one of the two most common triplets, and so on. For example, if the most common triplet for the foreign trained doctors is $A,B,C$, and the second most common is $B,C,A$, we are interested in the fraction of students whose top three choices are $A,B,C$, and at the fraction of students whose top three choices are either $A,B,C$ or $B,C,A$.
The results are presented in Figure~\ref{fig:top_triplets}. The right panel shows that the cumulative distribution of local interns is much higher than the cumulative distribution of the foreign trained doctors. The top ten triplets cover almost 50\% of the Israeli interns, but only about 20\% of the foreign trained doctors. Amazingly, even the density of each of the top 10 triplets (which are not necessarily the same among the two populations) is higher for the local interns (as depicted by the left panel). This indicates that their preferences are indeed much more homogeneous, at least at the top of the distribution. We note that Figure~\ref{fig:k_same_area} suggests that foreign trained doctors do exhibit a higher tendency to group hospitals by area. The only reasonable conclusion is that while foreign trained doctors care a lot about geography, there are less prone to select necessarily the center of Israel as their preferred location, or that they tend to have more heterogeneity in ranking hospitals within each area.

\begin{figure}[htp]
  \caption[Distribution of top triplets choices]{Distribution of top triplets (left panel: density, right panel: cumulative)}
	\centering
		 \includegraphics[scale=0.4]{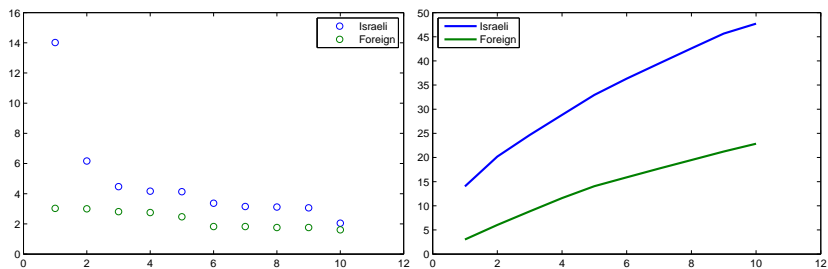}
	\label{fig:top_triplets}
\end{figure}

We also tried to study the differences between the preferences of singles and couples, but did not reach any significant conclusions. Furthermore, while we did not perform a thorough analysis, it is likely to assume that preferences depend also on the medical school in which an intern completed her medical studies, for both personal reasons (already lives in the same city) and professional reasons (knows the hospital better from her time at medical school). This by itself might not have an impact on designing a mechanism, but it may be of interest for understanding better the distribution of preferences and what creates heterogeneity.

\subsection{Simulation results for the algorithm}
In the simulations we used a subsampling method, i.e., we drew preference profiles from the union of all data points, where we distinguish between preferences that were submitted by single interns and preferences that were submitted by couples. This allowed us to create multiple ``possible markets'', each with $|I|=496$ (which was the actual number of interns in 2014) and with $24$ couples. For each point in the figure we used 5,000 different market draws.

To create Figures~\ref{fig:maxL1} and \ref{fig:avgL1}, we drew markets (in the sense explained above), then used our assignment mechanism to generate the matrix $M$, to which Algorithm~\ref{alg:approximation} was applied. While Theorem~\ref{th:approximationAlgorithm} only ensures approximation quality of $2/\underline{q}$, we can clearly see in Figure~\ref{fig:maxL1} that although in our case $\underline{q} = 4$, the average performance of our algorithm is far better than $\frac{1}{2}$ (the red vertical line). One reason for the difference between the theoretical prediction and the data is that our analysis assumes that the percentage of singles' weights is equal across all hospitals (see also the concluding discussion). The second and much more important reason is that even after applying the probability allocation mechanism, pathological examples similar to the one presented in Theorem~\ref{th:lowerBound} are relatively rare. A third reason is that the arbitrary way in which singles were allocated at the end of the algorithm may in fact improve the performance.\footnote{One may consider replacing the arbitrary assignment with a more sophisticated method to slightly improve the results. We did not pursue this direction any further.} We note that the spike at the left side of the distribution occurs exactly because of a small hospital to which couples are likely to be allocated because of their (non-random) preferences (indeed, this spike disappears completely in Figure~\ref{fig:maxRandomL1} which uses random allocation of probabilities).

\begin{figure}[htp]
  \caption[The distribution of the maximum value over the $L_1$ norm]{The histogram shows the distribution of the maximum value over the $L_1$ norm of each intern in the absolute distance between the original assignment matrix and the approximated one. The red dashed vertical line represents the theoretical upper bound.}
	\centering
		 \includegraphics[scale=0.4]{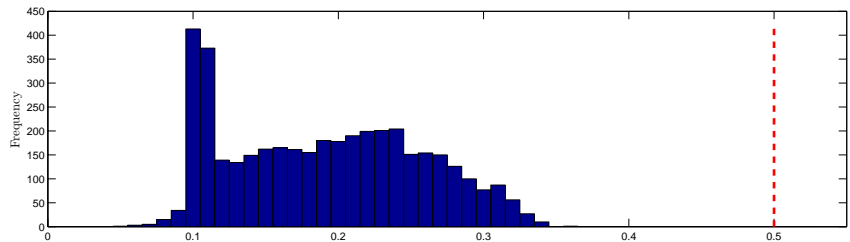}
	\label{fig:maxL1}
\end{figure}

For completeness, Figure~\ref{fig:avgL1} depicts the distribution of $\frac{| M - \sum_{k=1}^K \lambda^k M^k|\cdot\vec{\mathbf{1}}_{[I]}}{|I|}$ , which is the average $L_1$ norm. This metric was not analyzed theoretically, but it is of much interest to the social planner. It roughly means that while the intern who got her probability vector changed the most suffered an average of at about $15\%$ change, the average intern's probability vector was only changed by less than $2\%$.

\begin{figure}[htp]
  \caption[The distribution of the average value over the $L_1$ norm]{The histogram shows the distribution of the average value over the $L_1$ norm of each intern in the absolute distance between the original assignment matrix and the approximated one.}
	\centering
		 \includegraphics[scale=0.4]{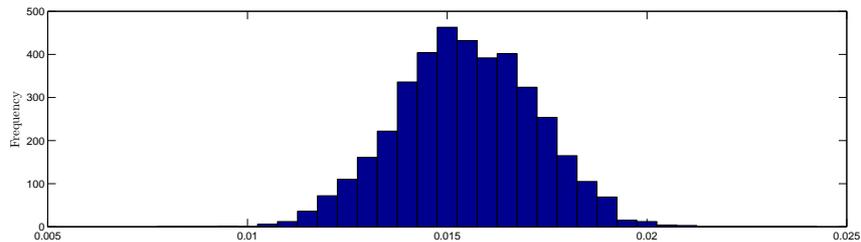}
	\label{fig:avgL1}
\end{figure}

In order to separate the effect of the LP mechanism used to generate probability matrices, we also produce similar figures for matrices produced randomly using the iterative algorithm of \cite{sk1967} (but only select those within our sub-domain of interest, $\mathcal{P}^{S \geq C}$), with the same market size as used for Figures~\ref{fig:maxL1} and \ref{fig:avgL1}. Figure~\ref{fig:maxRandomL1} indeed reveals that using random matrices reduces the maximal impact by a factor of about 5. However, it is evident from Figure~\ref{fig:avgRandomL1} that the average intern's probability vector is now more susceptible to changes following the application of our approximation algorithm.

\begin{figure}[htp]
  \caption[Maximum value over the $L_1$ norm, for random values]{The histogram shows the distribution of the maximum value over the $L_1$ norm of each intern in the absolute distance between the original assignment matrix and the approximated one, for randomly generated values.}
	\centering
		 \includegraphics[scale=0.4]{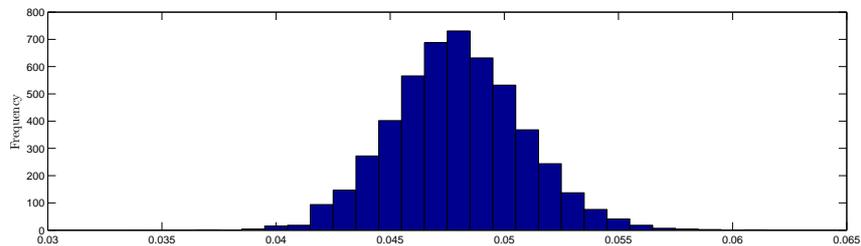}
	\label{fig:maxRandomL1}
\end{figure}

\begin{figure}[htp]
  \caption[Average value over the $L_1$ norm, for random values]{The histogram shows the distribution of the average value over the $L_1$ norm of each intern in the absolute distance between the original assignment matrix and the approximated one, for randomly generated values.}
	\centering
		 \includegraphics[scale=0.4]{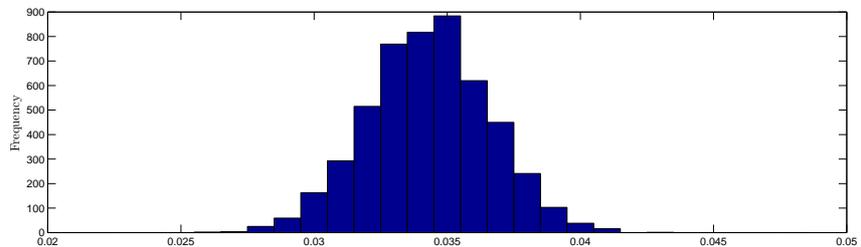}
	\label{fig:avgRandomL1}
\end{figure}

Finally, we also tried to see what would have happened to the performance of our algorithm if couples' preferences were more capacity-driven compared to their true distribution. For this test, we used the same distribution for singles' preferences, but each couples' preference was created by drawing hospitals one by one, where every time the draw is from those hospitals not drawn yet, and with the capacity of the hospital being the weight in the random draw. This made couples like big hospitals better than they like small hospitals, and thus the minimal weight of singles in all hospitals went up, and our algorithm's performance was improved (see Figures~\ref{fig:maxCapacityL1} and \ref{fig:avgCapacityL1}).

\begin{figure}[htp]
  \caption[Maximum value over the $L_1$ norm, for capacity-driven couples]{The histogram shows the distribution of the maximum value over the $L_1$ norm of each intern in the absolute distance between the original assignment matrix and the approximated one, for capacity-driven couples.}
	\centering
		 \includegraphics[scale=0.4]{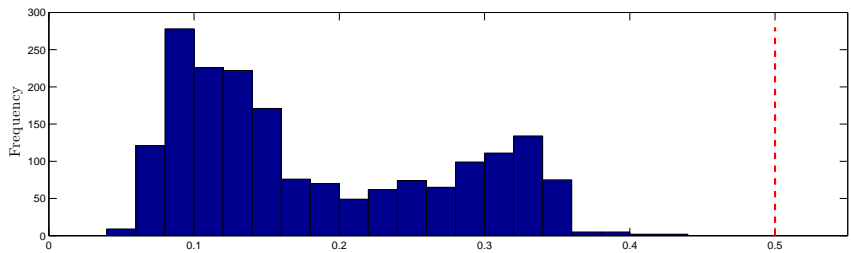}
	\label{fig:maxCapacityL1}
\end{figure}

\begin{figure}[htp]
  \caption[Average value over the $L_1$ norm, for capacity-driven couples]{The histogram shows the distribution of the average value over the $L_1$ norm of each intern in the absolute distance between the original assignment matrix and the approximated one, for capacity-driven couples.}
	\centering
		 \includegraphics[scale=0.4]{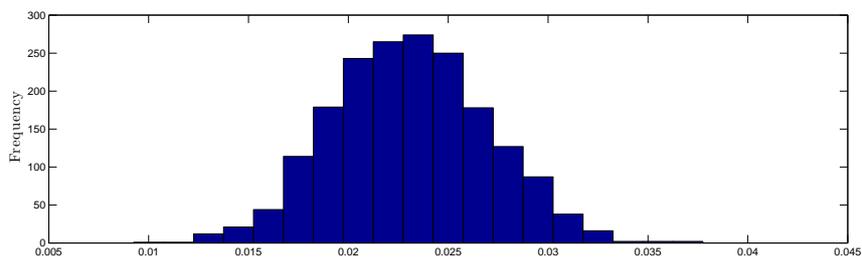}
	\label{fig:avgCapacityL1}
\end{figure}

Figure~\ref{fig:maxCapacityL1} presents a funny double-peaked distribution. The reason for this is not lack of experiments, but rather whether or not a couple has a chance of getting to a hospital with capacity $4$. Since the LP is an affine maximizer, if a couple has a chance to get to this small hospital, then this probability will be non negligible (in the order of at least $0.3$ or so). If the couple has a chance to get there then rounding the probabilities of the intern there will give the intern with the highest discrepancy in the decomposition. One the other hand, if no couple ever gets to that hospital the decomposition there is always perfect, and we need to worry about the second smallest hospital (which already has twice the number of interns than the smallest hosptial).

\section{Extensions} \label{sec:extensions}

\subsection{Lower percentage of singles in each hospital}
For the sake of simplicity we focused on the domain $\mathcal{P}^{S \geq C}$ in which there are more single interns than interns who are part of a couple in each hospital. It is easy to generalize our results to the domain of $\mathcal{P}^{S \geq \alpha C}$, in which in every hospital the total weight allocated to single interns is at least $\alpha$ times the total weight allocated to couples ($\forall h\in H: \sum_{s \in S} M_{s,h} \geq \alpha \cdot 2\sum_{c \in C} M_{c_1,h}$), for some positive $\alpha$.

The example given in Theorem~\ref{th:lowerBound} can be modified to have $\frac{2\alpha}{1+\alpha} q + 2$ singles and $\frac{1}{1+\alpha}q - 1$ couples, and the resulting bound will be $\frac{1}{\left(\frac{\alpha}{1+\alpha}\right) \underline{q} + 1}$. The approximation algorithm stays exactly the same, and it provides an approximation of $\frac{1}{\left(\frac{\alpha}{1+\alpha}\right) \underline{q}}$.

On a similar note, we remark that presenting the approximation as depending on the minimal capacity was a matter of choice. A slightly more accurate bound can be formulated in terms of the minimal singles' demand across hospital, i.e., $\min_{h \in H} \sum_{s \in S} M_{s,h} \left( \geq \frac{\alpha}{1+\alpha}\underline{q} \right)$. This bound works much better when for some reason couples focus their demand on larger hospitals.

\subsection{Groups larger than two}
In the Israeli Medical Internship Match, interns were only allowed to register either as singles or as couples. However, other applications may require larger groups to be assigned together. To extend our solution to larger groups it is possible to replace the first stage of the algorithm (the one which rounds up the capacities demanded by the couples and allocates the couples) by the allocation method of \cite{npv2014} (without the singles). Their algorithm ensures an allocation that deviates from the initial allocation by at most $k-1$ seats in each hospital, where $k$ is the size of the largest group. We can then take the remaining capacities and split them among the singles in a similar manner, and the analysis of the upper bound will remain similar (the coefficient however will change according to $k$). Obviously, our two extensions can be combined.

\section{Conclusion} \label{sec:conclusion}

In this thesis we described a recent application of knowledge and research in market design to the problem of allocating interns to internship positions in Israel, we presented a novel technique to perform assignments, and showed that it greatly improved the assignment of Israeli medical graduates to internships, increasing the number of students who received one of their top choices. This method requires the medical students to ``trade'' probabilities to get to different places, and therefore creates a new comparison between different hospitals, based on how much they are desired in the trade. We presented the results of this comparison, and showed that it makes much more sense than the traditional one (namely order the hospitals according to the number of students who ranked them first). Seeing that the new rating makes sense is an evidence that the probabilities in the new lottery are traded correctly.

Our data exhibited several very interesting characteristics that lead us, for example, to recommend (or at least suggest for consideration) merging the two cohorts (of local interns and foreign trained doctors) and assigning them using the same mechanism, since there are likely significant gains to trade. A na\"ive attempt that we made in putting combining these two populations resulted in an improvement for both populations under the new mechanism, and an improvement only for the local interns under RSD (with the foreign trained doctors receiving a worse expected rank). However, it is possible that the MoH will decide that the DNH principle should apply here as well and that the baseline that should be taken is not RSD on the merged population, but rather two independent runs of RSD (one for each population) and running the linear program on top of it.

We expect that decomposing stochastic matrices under small complementarity constraints will rise in other applications. Consider for example a lottery which assigns students to courses. A student needs to care about her probability of getting each course, but would also like the guarantee that two courses will not overlap. While the algorithm we presented here has a worst case approximation ratio $2/\underline{q}$ (where $\underline{q}$ is the minimal capacity in the problem), it behaves much better on simulated and on real data. The reason for this better behavior is that couples do not necessarily concentrate in the small hospitals (and indeed making couples prefer big hospitals improves the performance).

\subsection{Open questions}
\begin{enumerate}
\item What is a more principled way of balancing strategy-proofness with efficiency? We did not pay any extra price for strategy-proofness (in addition to the price for the DNH principle).
\item The students ended up giving no advantages to parents in 2014 and 2015. How would the algorithm behave given a population which consists also of parents? Is there a more principled way to cope with parents?
\item For the assignment problem with couples, what is the lower bound for approximation within the domain $\mathcal{P}^{S \geq C}$ and with elements of $M$ being ``small''?
\item For the assignment problem with couples, our approximation algorithm and analysis assumed that couples must be matched to the same hospital. Can a similar approximation be found when couples can be in different hospitals within the same city?
\end{enumerate}

\newpage
\singlespacing
\clearpage
 \addcontentsline{toc}{section}{Bibliography}
\bibliography{MatchingBibliography}
\newpage

\appendix

\section{Linear programming technical explanation}
\label{app:A}
To represent the \textit{individual rationality constraint}, Let $m$ be the number of hospitals, we define Student~$i$'s happiness prior to the trading stage as
\begin{equation}
		\label{eq:happinessBefore}
		 h_i=p_{i,1}{m^2}+p_{i,2}{(m-1)^2}+p_{i,3}{(m-2)^2}+\ldots{}+p_{i,m}{1^2},
	\end{equation}
	i.e., a function that represents students' strong preference to get to their top ranked hospitals. If the probabilities we get after the trading stage are $\hat{p}_{i,k}$ (as defined in Eq.~\ref{eq:prob}), then we define the happiness following the optimization as
	\begin{equation}
		\label{eq:happinessAfter}
		 \hat{h}_i=\hat{p}_{i,1}{m^2}+\hat{p}_{i,2}{(m-1)^2}+\hat{p}_{i,3}{(m-2)^2}+...+\hat{p}_{i,m}{1^2}.
	\end{equation}
Now our constraints are $\hat{h}_i \geq h_i$ for every student $i$.

Regarding the target function, as described in the Methods section, we want to maximize the total satisfaction of the students after trading. Letting $n$ denote the number of students (in 2014, $n = 496$), our optimization goal is going to be
\begin{equation}
		\label{eq:target2}
		\max  \sum_{i=1}^n{\hat{h}_i}=\hat{h}_1+\hat{h}_2+...+\hat{h}_n,
\end{equation}
that is, maximizing the sum of happiness for all interns. This target function, as well as the definition of happiness in Eq.~\ref{eq:happinessBefore}, was chosen following a survey filled by approximately 70 interns and 6th year medical students.
We note that using similar target functions and making small changes in the weights used to define happiness did not have a profound effect on the statistics of the assignment.

\section{Lower Bound for small probabilities} \label{apndx:small_probs}
Define an example with
\begin{align*}
H &= \left\{h_1,\ldots,h_{2m},h'_1,\ldots,h'_{2m}\right\}, \\
S &= \{s_1,\ldots,s_{2m(2k+1)}\}, \text{and} \\
C &= \{c_1,\ldots,c_{(2k+1)m}\},
\end{align*}
where $m$ and $k$ are integers. Consider the target matrix $M\in \mathcal{P}^{S \geq C}$ described in Table~II. Under this stochastic assignment matrix each single intern in $S$ gets a probability $\frac{1}{2m}$ to be assigned to each hospital in $\{h'_1,\ldots,h'_{2m}\}$, and each couple in $C$ gets a probability of $\frac{1}{2m}$ to be assigned to each hospital in $\{h_1,\ldots,h_{2m}\}$. One can verify that the capacity of every hospital is $2k+1$.

In each deterministic assignment, the couples occupy at most $4mk$ slots in the hospitals in $\{h_1,\ldots,h_{2m}\}$, hence the singletons occupy at least $2m$ slots in $\{h_1,\ldots,h_{2m}\}$. This is true for every assignment, hence true for their convex combination as well, meaning that some single
intern spends at least $\frac{2m}{(2m)(2k+1)}=\frac{1}{2k+1}$ of his probability in the hospitals in $\{h_1,\ldots,h_{2m}\}$. Thus, the approximation cannot be better than $\frac{2}{2k+1} = \left(\frac{2}{\underline{q}}\right)$.

\begin{table}[htbp]
	\caption{Target matrix for small probabilities example}
	\centering
		\begin{tabu}{|c|c|c|c|c|c|c|}
			\hline
			\textbf{} & \textbf{$h_1$} & \textbf{$\cdots$} & \textbf{$h_{2m}$} & \textbf{$h'_1$} & \textbf{$\cdots$} & \textbf{$h'_{2m}$} \\ \hline
			
			$s_1$ & $0$ & $\cdots$ & $0$ & $1/(2m)$ & $\cdots$ & $1/(2m)$ \\ \hline
			\vdots & $0$ & $\cdots$ & $0$ & $1/(2m)$ & $\cdots$ & $1/(2m)$ \\ \hline
			$s_{2m(2k+1)}$ & $0$ & $\cdots$ & $0$ & $1/(2m)$ & $\cdots$ & $1/(2m)$ \\ \tabucline[1pt]{-}
			
			$c_{1,1}$ & $1/(2m)$ & $\cdots$ & $1/(2m)$ & $0$ & $\cdots$ & $0$ \\ \hline
			$c_{1,2}$ & $1/(2m)$ & $\cdots$ & $1/(2m)$ & $0$ & $\cdots$ & $0$ \\ \hline
			\vdots & $1/(2m)$ & $\cdots$ & $1/(2m)$ & $0$ & $\cdots$ & $0$ \\ \hline
			$c_{{m(2k+1)},1}$ & $1/(2m)$ & $\cdots$ & $1/(2m)$ & $0$ & $\cdots$ & $0$ \\ \hline
			$c_{{m(2k+1)},2}$ & $1/(2m)$ & $\cdots$ & $1/(2m)$ & $0$ & $\cdots$ & $0$ \\ \tabucline[1pt]{-}
		
		\end{tabu}
		\label{tab:smallProbsMatrix}
		
\end{table}

\end{document}